\DeclareMathOperator{\h}{\mathcal{H}} 
\DeclareMathOperator{\A}{\mathcal{A}} 
\DeclareMathOperator{\pee}{\mathds{P}}
\DeclareMathAlphabet{\mathpzc}{OT1}{pzc}{m}{it}   
\newtheorem{lemma}{Lemma}[section] 
\newtheorem{theorem}{Theorem}[section]
\theoremstyle{definition}
\newtheorem{definition}{Definition}[section] 
\begin{document}

\title{Speakable in quantum mechanics: babbling on} 
\author{Ronnie Hermens\institute{Department of Theoretical Philosophy\\University of Groningen}\email{r.hermens@rug.nl}}

\maketitle

\begin{abstract}
This paper summarizes and elaborates on some of the results in \cite{Hermens12}.
The summary consists of a short version of the derivation of the intuitionistic quantum logic $L_{QM}$ (which was originally introduced in \cite{CHLS09}). 
The elaboration consists of extending this logic to a classical logic $CL_{QM}$. 
Some first steps are then taken towards setting up a probabilistic framework based on $CL_{QM}$ in terms of R\'enyi's conditional probability spaces.
Comparisons are then made with the traditional framework for quantum probabilities.
\end{abstract}

\section{Introduction}
In the Hilbert space formalism of quantum mechanics (QM), a quantum probability function is a function $\pee$ from the lattice $L(\h)$ of projection operators acting on the Hilbert space $\h$, to the unit interval $[0,1]$, which satisfies the rules
\begin{itemize}
\item $\pee(1)=1$,
\item $\pee(P_1\vee P_2\vee\ldots)=\pee(P_1)+\pee(P_2)+\ldots$ \quad whenever $P_iP_j=0$ for all $i\neq j$. 
\end{itemize}
This characterization of quantum probabilities is reminiscent (at least in form) of the classical structure, where a probability function is specified by a pair $(\mathcal{F},\pee)$ where $\mathcal{F}$ denotes a $\sigma$-algebra of subsets of some space $\Omega$ and $\pee:\mathcal{F}\to[0,1]$ is a function which satisfies
\begin{itemize}
\item $\pee(\Omega)=1$,
\item $\pee(\Delta_1\cup\Delta_2\cup\ldots)=\pee(\Delta_1)+\pee(\Delta_2)+\ldots$ \quad whenever $\Delta_i\cap\Delta_j=\varnothing$ for all $i\neq j$.
\end{itemize}

Because of the resemblance in structure between the two definitions, it is tempting to think that one must be able to find resemblances between admissible interpretations of the probabilities.
However, as indicated in \cite{Wilce12}, such a program has to face the problem of explaining the non-Boolean nature of $L(\h)$.
This issue may best be understood by contrasting the quantum structure with the classical structure of $\sigma$-algebras.
By far the majority of philosophical approaches to classical probability agree on one thing.
Namely, that the probability $\pee(\Delta)$ can be understood as the probability that some proposition $S_\Delta$ codified by $\Delta$ is true. 
For example, if $\Omega$ is understood as the set of all possible worlds, then $\Delta$ corresponds to the subset of all possible worlds in which $S_\Delta$ is true.  
The question for the quantum case is now straightforward: if the elements of $L(\h)$ correspond to propositions, what do these propositions express?

The most prominent early contribution to this question is undoubtedly the work of Birkhoff and von Neumann \cite{BirkhoffNeumann36}.
The backbone of this paper already appeared in \cite{Neumann55} where the idea of taking projections as propositions was introduced.
It is roughly motivated as follows.
If $A$ is an observable, and $\Delta$ a (measurable) subset of the reals, then the proposition $A\in\Delta$ is taken to be true if and only if the probability that a measurement of $A$ reveals some value in $\Delta$ equals one.
This is then the case if and only if the state of the system lies in $P_A^{\Delta}\mathcal{H}$, with $\mathcal{H}$ the Hilbert space describing the system, and $P_A$ the projection valued measure generated by $A$.
Or, in terms of density operators, if and only if $\mathrm{Tr}(\rho P_A^\Delta)=1$.

Of course this raises the question of what is meant by ``a measurement of $A$ reveals some value in $\Delta$''.
The most tempting answer is the one suggested by the notation $A\in\Delta$, i.e., that a measurement of $A$ reveals some value in $\Delta$ if and only if the observable $A$ has some value in $\Delta$, independent of the measurement.
A straightforward adoption of this realist reading is bound to run into difficulties posed by results such as the Kochen-Specker theorem \cite{KS67} or the Bell inequalities \cite{Bell64,Clauser69}.
A possible escape is to deny that $A\in\Delta$ expresses a proposition (or has a truth value) for every pair $(A,\Delta)$ at every instant.
This can be done by, for example, adopting the strong property postulate ($A$ has a value if and only if the system is in an eigen state of $A$).
Another possibility is to deny the bijection between observables and self-adjoint operators such as in Bohmian mechanics, where position observables play a privileged role.\footnote{In this approach, observables are associated with functions on the configuration space of particle positions. Every proposition $A\in\Delta$ then reduces to a proposition on particle positions. This reduction is contextual in a sense; the function associated with $A$ is not the same for every experimental setup used to measure $A$.}
Or one may suggest that these propositions do not obey the laws of classical logic.
This last option was the one suggested by Putnam in \cite{Putnam69}.
He argued that, apart from coinciding with a set of propositions about values possessed by observables, the lattice $L(\h)$ also describes the correct logic for these propositions.
However, this realist interpretation of the quantum logic $L(\h)$ is known to lead to problems \cite{Dummett76,Stairs83} and the consensus seems to be that there is no hope for this direction \cite{Maudlin05}. 

In contrast to these difficulties encountered in realist approaches to understanding $L(\h)$, it is often thought that an operationalist interpretation is unproblematic, and perhaps even straightforward:
\begin{quote}
If we put aside scruples about `measurement' as a primitive term in physical theory, and accept a principled distinction between `testable' and non-testable properties, then the fact that $L(\h)$ is not Boolean is unremarkable, and carries no implication about logic per se. 
Quantum mechanics is, on this view, a theory about the possible statistical distributions of outcomes of certain measurements, and its non-classical `logic' simply reflects the fact that not all observable phenomena can be observed simultaneously. -- Wilce \cite[\S 2]{Wilce12}
\end{quote}
The idea is that the task of interpreting the structure of $L(\h)$ is tied up with explicating the notion of measurement.
Then, since the operationalist takes the notion of measurement as primitive, the structure of $L(\h)$ may also be taken as primitive.
This line of reasoning seems unsatisfactory to me and I think that the questions of what the propositions in $L(\h)$ express, and what the logic is that governs them, are also meaningful and require study from an operationalist point of view.
And as a guide to such a logic one may take into account Bohr's demand that 
\begin{quote}
all well-defined experimental evidence, even if it cannot be analyzed in terms of classical physics, must be expressed in ordinary language making use of common logic. -- Bohr \cite{Bohr48}
\end{quote}
In this paper I continue the work done in \cite{Hermens12} to meet this demand.
Specifically, in section \ref{SQMsection} I first introduce a lattice of experimental propositions $S_{QM}$ which is then extended to an intuitionistic logic $L_{QM}$ in section \ref{LQMsection}.
These results comprise a summary of some of the results in \cite{Hermens12}.
Then, also in section \ref{LQMsection}, this logic is extended to a classical logic $CL_{QM}$ for experimental propositions.
In section \ref{QProbsec} R\'enyi's theory of conditional probability spaces \cite{Renyi55} is applied to $CL_{QM}$ and it is shown that the quantum probabilities as given by the Born rule can be reproduced.
This is followed by a short discussion on the issue that non-quantum probabilities are also allowed in this framework.
Finally, in section \ref{conclusionsection}, the results of this paper are evaluated and some ideas on what further role they may play in foundational debates are given.


\section{The lattice of experimental propositions}\label{SQMsection}
The idea to focus on the experimental side of QM is reminiscent of the approach of Birkhoff and von Neumann in \cite{BirkhoffNeumann36}.
They start by introducing `experimental propositions' which are identified with subsets of `observation spaces'.
An observation space is the Cartesian product of the spectra of a set of pairwise commuting observables.
However, once such a subset is associated with a projection operator in the familiar way, the experimental context (i.e., the set of pairwise commuting observables) is no longer being considered.
For example, one may have that $P_{A_1}^{\Delta_1}=P_{A_2}^{\Delta_2}$ even though $A_1$ and $A_2$ may be incompatible.
This indicates that whatever is encoded by propositions in $L(\h)$, they do not refer to actual experiments.

To avoid a cumbersome direct discussion of $L(\h)$, it is then appropriate to instead develop a logic of propositions that \emph{do} encode these experimental contexts.
As an elementary example of what Bohr may have had in mind when speaking of ``well-defined experimental evidence'' I take propositions of the form
\begin{equation}\label{reading}
  M_A(\Delta) =\text{``$A$ is measured and the result lies in $\Delta$.''}
\end{equation}
Of course, no logical structure can be derived for these propositions without some structural assumptions.
I will adopt the following two assumptions which I believe to be quite innocent\footnote{These assumptions are the ones also adopted in \cite{Hermens12} although there the second was implicit and the first was formulated to also apply to other theories.}:
\begin{itemize}
\item[\textbf{LMR}](Law-Measurement Relation) If $A_1$ and $A_2$ are two observables that can be measured together, and if $f$ is a function such that whenever $A_1$ and $A_2$ are measured together the outcomes $a_1$ and $a_2$ satisfy $a_1=f(a_2)$ (i.e., $f$ represents a law), then a measurement of $A_2$ alone also counts as a measurement of $A_1$ with outcome $f(a_2)$.\footnote{This assumption is reminiscent of the FUNC rule adopted in the Kochen-Specker theorem (c.f. \cite[p.121]{Redhead87}). However, FUNC is a much stronger assumption stating that whenever measurement outcomes for a pair of observables satisfy a functional relation, the values possessed by these observables must satisfy the same relationship. That is, FUNC is the conjunction of LMR together with the idea that measurements reveal pre-existing values.}
\item[\textbf{IEA}](Idealized Experimenter Assumption) Every experiment has an outcome, i.e., for every observable $A$ $M_A(\varnothing)$ is understood as a contradiction.
\end{itemize}
The following lemma describes the structure arising from these assumptions for QM. 
A more elaborate exposition may be found in \cite{Hermens12}.

\begin{lemma}
Under the assumptions LMR and IEA the set of (equivalence classes of) experimental propositions for quantum mechanics lead to the lattice 
\begin{equation}
  S_{QM}:=\{(\A,P)\:;\:\A\in\mathfrak{A},P\in L(\A)\backslash\{0\}\}\cup\{\bot\},
\end{equation}
where $\mathfrak{A}$ denotes the set of all Abelian operator algebras on $\h$ and $L(\A)$ the Boolean lattice $\A\cap L(\h)$.
The partial order is given by
\begin{equation}
  (\mathcal{A}_1,P_1)\leq(\mathcal{A}_2,P_2)~\text{iff}~P_1=0~\text{or}~\mathcal{A}_1\supset\mathcal{A}_2,P_1\leq P_2.
\end{equation}
\end{lemma}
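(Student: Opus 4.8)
The plan is to build a bijection between the equivalence classes of the elementary propositions $M_A(\Delta)$ and the pairs $(\mathcal{A},P)$, and then to read the partial order off logical entailment. First I would attach to each observable $A$ the Abelian von Neumann algebra $\mathcal{A}=W^*(A)$ it generates, noting that its spectral projections $P_A^\Delta$ are exactly the elements of $L(\mathcal{A})$, and map the proposition $M_A(\Delta)$ to the pair $(W^*(A),P_A^\Delta)$. The role of \textbf{LMR} is to certify that this map respects equivalence: if $B=f(A)$ is a function of $A$ then $A$ and $B$ are jointly measurable and a measurement of $A$ counts, by \textbf{LMR}, as a measurement of $B$; hence $M_A(\Delta)$ and $M_B(\Gamma)$ express the same experimental proposition precisely when they fix the same context and the same outcome region, i.e. when $W^*(A)=W^*(B)$ and $P_A^\Delta=P_B^\Gamma$. \textbf{IEA} then collapses every empty-outcome proposition $M_A(\varnothing)$---all of which carry the projection $0$---into a single contradiction; this is exactly why the pairs are restricted to $P\neq 0$ and the element $\bot$ is adjoined by hand.

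For surjectivity I would use the standard fact that an Abelian von Neumann algebra on a separable Hilbert space is singly generated: given any $(\mathcal{A},P)$ with $\mathcal{A}\in\mathfrak{A}$ and $P\in L(\mathcal{A})\setminus\{0\}$, pick a generating self-adjoint $A$ with $W^*(A)=\mathcal{A}$ and a Borel set $\Delta$ with $P=P_A^\Delta$, so that $M_A(\Delta)$ realises the pair. Together with the previous paragraph this identifies the set of equivalence classes with the underlying set of $S_{QM}$.

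Next I would derive the order, taking $\leq$ to be entailment. That $M_{A_1}(\Delta_1)$ implies the first conjunct of $M_{A_2}(\Delta_2)$---that $A_2$ is measured---holds exactly when $A_2$ is a function of $A_1$, which by \textbf{LMR} is the condition $A_2\in W^*(A_1)$, i.e. $\mathcal{A}_2\subseteq\mathcal{A}_1$ (a finer context entails a coarser one). Writing $A_2=f(A_1)$ and unwinding the spectral calculus, the demand that the $A_1$-result lying in $\Delta_1$ force the $A_2$-result into $\Delta_2$ becomes $P_{A_1}^{\Delta_1}\leq P_{A_1}^{f^{-1}(\Delta_2)}=P_{A_2}^{\Delta_2}$, that is $P_1\leq P_2$. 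With the convention that $\bot$ (the $P_1=0$ case) lies below everything, this is precisely the stated relation. I would then verify reflexivity, antisymmetry and transitivity, and check the lattice operations: restricted to a fixed $\mathcal{A}$ the poset is the Boolean lattice $L(\mathcal{A})$, while the meet of two contexts passes to the generated algebra $\mathcal{A}_1\vee\mathcal{A}_2$ when they commute and returns $\bot$ otherwise (or when the relevant projection vanishes).

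The step I expect to be the real obstacle is the \emph{only-if} direction underlying both the equivalence and the order: that measuring $A$ can be re-read as measuring $B$ \emph{only} when $B\in W^*(A)$. \textbf{LMR} delivers the \emph{if} direction directly, but the converse relies on the quantum-mechanical facts that joint measurability coincides with commutativity and that a law relating jointly measurable observables is exactly a functional dependence $A=f(B)$; making this precise, rather than the bijection or the routine lattice bookkeeping, is where the substantive argument lies.
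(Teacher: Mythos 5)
Your proposal is correct and follows essentially the same route as the paper's own proof: unpack entailment between $M_{A_1}(\Delta_1)$ and $M_{A_2}(\Delta_2)$ as ``measurement of $A_1$ implies measurement of $A_2$ plus outcome containment,'' translate this via LMR into $\mathcal{A}_2\subseteq\mathcal{A}_1$ and $P_{A_1}^{\Delta_1}\leq P_{A_2}^{\Delta_2}$, and use IEA to collapse all empty-outcome propositions into the single element $\bot$. The paper states exactly this argument in three sentences, taking the ``only if'' direction as part of what LMR-generated entailment means; your extra material (surjectivity via singly generated Abelian algebras, and the explicit flag that the converse direction is a convention about the entailment relation rather than a consequence of LMR alone) fills in details the paper leaves implicit rather than diverging from its approach.
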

\begin{proof}
A proposition $M_{A_1}(\Delta_1)$ implies $M_{A_2}(\Delta_2)$ if and only if a measurement of $A_1$ implies a measurement of $A_2$ and every outcome in $\Delta_1$ implies an outcome in $\Delta_2$ for the measurement of $A_2$ (or if $M_{A_1}{\Delta_1}$ expresses a contradiction).
By LMR, this is the case if and only if the Abelian algebra $\A_2$ generated by $A_2$ is a subalgebra of the Abelian algebra $\A_1$ generated by $A_1$ and $P_{A_1}^{\Delta_1}\leq P_{A_2}^{\Delta_2}$ (or $P_{A_1}^{\Delta_1}=0$).
This also establishes that $M_{A_1}(\Delta_1)$ and $M_{A_2}(\Delta_2)$ are equivalent if and only if $(\mathcal{A}_1,P_{A_1}^{\Delta_1})=(\mathcal{A}_2,P_{A_2}^{\Delta_2})$ (or both express a contradiction). 
\end{proof}

The meet on $S_{QM}$ is given by
\begin{equation}
  (\mathcal{A}_1,P_1)\wedge(\mathcal{A}_2,P_2)=\begin{cases}(\mathpzc{Alg}(\mathcal{A}_1,\mathcal{A}_2),P_1\wedge P_2),&\text{if}~[\mathcal{A}_1,\mathcal{A}_2]=0,P_1\wedge P_2\neq0,\\ \bot,&\text{otherwise,}\end{cases}
\end{equation}
where $P_1\wedge P_2$ is the meet on $L(\h)$.
It is consistent with the interpretation given to $M_A(\Delta)$ in the sense that this meet `distributes' over the `and' in ``$A$ is measured and the result lies in $\Delta$''.
In particular, it assigns a contradiction to a simultaneous measurement of two observables whenever their corresponding operators do not commute.

It is harder to interpret the join on this lattice as a disjunction.
When restricting to joins of two elements of $S_{QM}$ with the same algebra, the results seem intuitively correct.
In this case one has 
\begin{equation}\label{disjcom}
  (\mathcal{A},P_1)\vee(\mathcal{A},P_2)=(\mathcal{A},P_1\vee P_2).
\end{equation}
And indeed, ``$A$ is measured and the measurement reveals some value in $\Delta_1\cup\Delta_2$'' sounds like a good paraphrase of ``$A$ is measured and the measurement reveals some value in $\Delta_1$ or $A$ is measured and the measurement reveals some value in $\Delta_2$''.
But in more general cases problems arise.
For example, one has $(\mathcal{A}_1,1)\vee(\mathcal{A}_2,1)=(\mathcal{A}_1\cap\mathcal{A}_2,1)$.
Then, if the two algebras are completely incompatible (i.e. $\mathcal{A}_1\cap\mathcal{A}_2=\mathbb{C}1$), this equation states that ``$A_1$ is measured or $A_2$ is measured'' is a tautology even though one can consider situations in which neither is measured.

The example can further be used to show that $S_{QM}$ is not distributive.
Let $A_3$ be a third observable that is totally incompatible with both $A_1$ and $A_2$.
One then has
\begin{equation}
  (\mathcal{A}_3,1)\wedge((\mathcal{A}_1,1)\vee(\mathcal{A}_2,1))
  =(\mathcal{A}_3,1)\neq\bot
  =((\mathcal{A}_3,1)\wedge(\mathcal{A}_1,1))\vee((\mathcal{A}_3,1)\wedge(\mathcal{A}_2,1)).
\end{equation}
Hence, when it comes to providing a logic that fares well with natural language, $S_{QM}$ doesn't perform much better than the original $L(\h)$.


\section{Deriving the logics \texorpdfstring{$L_{QM}$}{LQM} and \texorpdfstring{$CL_{QM}$}{CLQM}}\label{LQMsection}
To solve the problem of non-distributivity of $S_{QM}$ the lattice has to be extended by formally introducing disjunctions which in general are stronger than the join on $S_{QM}$.
The outcome of this process is described by the following theorem.\footnote{Again, more details may be found in \cite{Hermens12}.}

\begin{theorem}
Formally introducing disjunctions to $S_{QM}$ while respecting \eqref{disjcom} results in the complete distributive lattice 
\begin{equation}
  L_{QM}=\left\{S:\mathfrak{A}\to L(\mathcal{H})\:;\:\substack{S(\mathcal{A})\in L(\A)\\ S(\A_1)\leq S(\A_2)~\text{whenever}~\A_1\subset\A_2}\right\}
\end{equation} 
with  partial order on $L_{QM}$ defined by
\begin{equation}
  S_1\leq S_2~\text{iff}~S_1(\mathcal{A})\leq S_2(\mathcal{A})\forall\mathcal{A}\in\mathfrak{A}
\end{equation} 
giving the meet and join
\begin{equation}\label{operations}
\begin{split}
  (S_1\vee S_2)(\mathcal{A})&=S_1(\mathcal{A})\vee S_2(\mathcal{A}),\\
  (S_1\wedge S_2)(\mathcal{A})&=S_1(\mathcal{A})\wedge S_2(\mathcal{A}).
\end{split}
\end{equation}
The embedding of $S_{QM}$ into $L_{QM}$ is given by\footnote{For notational convenience $(\A,0)$ is identified with $\bot\in S_{QM}$.}
\begin{equation}\label{imbed}
  i:(\mathcal{A},P)\mapsto S_{(\mathcal{A},P)},~S_{(\mathcal{A},P)}(\mathcal{A}'):=\begin{cases}P,&\mathcal{A}\subset\mathcal{A}',\\0,&\text{otherwise}.\end{cases}
\end{equation}
\end{theorem}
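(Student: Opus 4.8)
The plan is to read ``formally introducing disjunctions while respecting \eqref{disjcom}'' as the construction of the free complete distributive lattice generated by $S_{QM}$ in which the meet of $S_{QM}$ and the joins already present in the form \eqref{disjcom} are preserved, and then to verify that the concretely described object $L_{QM}$ is a model of this construction. First I would check that $L_{QM}$ is a complete lattice under the pointwise operations \eqref{operations}. The point is that each fibre $L(\mathcal{A})=\mathcal{A}\cap L(\h)$ is a complete Boolean lattice, so pointwise suprema and infima exist; one then checks that the monotonicity condition $S(\mathcal{A}_1)\leq S(\mathcal{A}_2)$ for $\mathcal{A}_1\subset\mathcal{A}_2$ is preserved under arbitrary pointwise joins and meets, and that the resulting value stays in the correct fibre. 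Distributivity of $L_{QM}$ is then inherited fibrewise from the distributivity of each Boolean $L(\mathcal{A})$, since joins and meets are computed pointwise; in fact the frame law holds in each complete Boolean fibre and hence pointwise in $L_{QM}$.

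Next I would treat the embedding $i$. It is routine that $S_{(\mathcal{A},P)}$ lies in $L_{QM}$ (it is monotone, and $P\in L(\mathcal{A})\subseteq L(\mathcal{A}')$ whenever $\mathcal{A}\subset\mathcal{A}'$) and that $i$ is injective and an order-embedding. The essential check is that $i$ respects the structure we have promised to preserve: evaluating at an arbitrary $\mathcal{A}'$ gives $i((\mathcal{A},P_1))\vee i((\mathcal{A},P_2))=i((\mathcal{A},P_1\vee P_2))$, which is exactly \eqref{disjcom}, and a similar fibrewise computation---using that $\mathcal{A}_1,\mathcal{A}_2\subset\mathcal{A}'$ forces $[\mathcal{A}_1,\mathcal{A}_2]=0$, so that non-commuting algebras leave the meet identically $0$---shows that $i$ carries the meet of $S_{QM}$ to the pointwise meet. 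Thus $i$ is a meet-embedding preserving the distinguished joins.

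The heart of the argument is to show that $L_{QM}$ is generated from $i(S_{QM})$ by the freely added joins and nothing more. Here I would prove the join-density identity
\[
  S=\bigvee_{\mathcal{A}\in\mathfrak{A}} i\big((\mathcal{A},S(\mathcal{A}))\big),
\]
which one verifies by evaluating both sides at an arbitrary $\mathcal{A}'$: the right-hand side collapses to $\bigvee_{\mathcal{A}\subset\mathcal{A}'}S(\mathcal{A})=S(\mathcal{A}')$, by monotonicity together with the presence of the term $\mathcal{A}=\mathcal{A}'$. This shows that every element of $L_{QM}$ is a disjunction of embedded experimental propositions, so the passage from $S_{QM}$ to $L_{QM}$ adds no spurious elements.

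The main obstacle is the universality implicit in ``formally introducing''---that $L_{QM}$ makes no more identifications than those forced by the meet and by \eqref{disjcom}. I would establish this by exhibiting $L_{QM}$ as the lattice of join-closed down-sets of $S_{QM}$ via the correspondence $S\mapsto\{(\mathcal{A},P) : P\leq S(\mathcal{A})\}$ (one checks directly that this set is a down-set closed under the same-algebra joins \eqref{disjcom}), for which the universal property is standard: any monotone map $\phi$ from $S_{QM}$ into a complete distributive lattice $M$ that preserves meets and \eqref{disjcom} extends to $\hat\phi(S)=\bigvee_{\mathcal{A}}\phi((\mathcal{A},S(\mathcal{A})))$. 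The delicate step is verifying that $\hat\phi$ is a lattice homomorphism: preservation of joins is immediate from the density identity, but preservation of finite meets requires the frame (infinite-distributive) law in $M$ to push the defining joins through the meet. Checking this, together with the bijectivity of the down-set correspondence, is where the real care is needed and is what completes the identification of $L_{QM}$ with the intended free construction.
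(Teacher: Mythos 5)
Your proposal is correct, and its core is the same as the paper's own proof: the paper likewise obtains completeness and distributivity fibrewise from the complete Boolean lattices $L(\A)$ together with the pointwise operations \eqref{operations}, checks that $i$ sends the meet of $S_{QM}$ to the pointwise meet and respects \eqref{disjcom} (noting $S_{(\A_1,P_1)}\vee S_{(\A_2,P_2)}\leq S_{(\A_1,P_1)\vee(\A_2,P_2)}$, with equality only when $P_1=0$, $P_2=0$, or $\A_1=\A_2$, so that genuinely new disjunctions appear), and rests everything on your join-density identity, which is precisely the paper's key relation \eqref{relation}. Where you differ is in your final paragraph: the paper never formulates a universal property --- it takes the join-density relation as already securing the intended reading of elements of $L_{QM}$ as disjunctions of experimental propositions, and defers finer details to \cite{Hermens12} --- whereas you make the phrase ``formally introducing disjunctions'' precise as a freeness claim and sketch its proof via the correspondence with same-algebra-join-closed down-sets. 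That extra step is a genuine strengthening, and it does go through as you outline: $\hat\phi$ preserves binary meets because the frame law in $M$ reduces the question to comparing $\phi\big((\A_1,S_1(\A_1))\wedge(\A_2,S_2(\A_2))\big)$ with $\phi\big((\A,S_1(\A)\wedge S_2(\A))\big)$ for $\A=\mathpzc{Alg}(\A_1,\A_2)$ on commuting pairs, where monotonicity gives one direction and the diagonal terms $\A_1=\A_2=\A$ give the other. Two small cautions if you carry it out in full: completeness requires the infinitary version of \eqref{disjcom} (preservation of arbitrary same-algebra joins, which $i$ does satisfy, not just binary ones), and the target $M$ must be a frame rather than merely a complete distributive lattice for the meet computation to work --- a distinction your parenthetical remark about the infinite-distributive law already hints at.
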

\begin{proof}
The key to the extension of $S_{QM}$ to $L_{QM}$ lies in noting that the embedding \eqref{imbed} satisfies the relation
\begin{equation}\label{relation}
  S=\bigvee_{\A\in\mathfrak{A}}i(\A,S(\A)).
\end{equation}
This gives the intended reading of elements of $L_{QM}$, namely, as a disjunction of experimental propositions.
By LMR it follows that $i$ preserves the interpretation of elements of $S_{QM}$.
The conjunctions on $L_{QM}$ further agree with those on $S_{QM}$:
\begin{equation}
  S_{(\A_1,P_1)}\wedge S_{(\A_2,P_2)}= S_{(\A_1,P_1)\wedge(\A_2,P_2)}.
\end{equation}
Further, $L_{QM}$ respects \eqref{disjcom} while introducing `new' elements for other disjunctions, i.e.,
\begin{equation}
  S_{(\A_1,P_1)}\vee S_{(\A_2,P_2)}\leq S_{(\A_1,P_1)\vee(\A_2,P_2)},
\end{equation}
with equality if and only if either $P_1$ or $P_2$ equals $0$, or $\A_1=\A_2$. 
Finally, that $L_{QM}$ is a complete distributive lattice follows from \eqref{operations} and observing that $L(\mathcal{A})$ is a complete distributive lattice for every $\mathcal{A}\in\mathfrak{A}$.
\end{proof}

The lattice $L_{QM}$ is turned into a Heyting algebra by obtaining the relative pseudo-complement in the usual way:
\begin{equation}
  S_1\to S_2=\bigvee\left\{S\in L_{QM}\:;\:S\wedge S_1\leq S_2\right\}.
\end{equation}
And negation is introduced by $\neg S:=S\to\bot$ thus obtaining an intuitionistic logic for reasoning with experimental propositions.
The logic $L_{QM}$ is also a proper Heyting algebra and, in fact, is radically non-classical.
That is, the law of excluded middle only holds for the top and bottom element.
This is easily checked by observing that $S(\mathbb{C}1)=1$ if and only if $S=\top$. 
It then follows that $S\vee\neg S=\top$ if and only if $S=\top$ or $\neg S=\top$.

The strong non-classical behavior can be understood by noting that the logic is entirely built up from propositions about performed measurements.
The negation of $M_A(\Delta)$ is identified with other propositions about measurements that exclude the possibility of $M_A(\Delta)$ but leave open the option of no experiment having been performed:
\begin{equation}
  \neg S_{(\A,P)}=\bigvee \left\{S_{(\A',P')}\:;\: [\A',\A]\neq 0~\text{or}~P'\wedge P=0\right\}.
\end{equation}
An unexcluded middle thus presents itself as the proposition $\neg M_A$=``$A$ is not measured''.
It was suggested in \cite[\S2]{Hermens12} that incorporating such propositions could lead to a classical logic.
I will now show that this is indeed the case.

It is tempting to start the program for a classical logic for experimental propositions by formally adding the suggested propositions $\neg M_A$ and building on from there.
However, it turns out that it is much more convenient to introduce new elementary experimental propositions, namely
\begin{equation}\label{prov}
  M_A^!(\Delta)=\text{``$A$ is measured with result in $\Delta$, and no finer grained measurement has been performed''}.
\end{equation}
The notion of fine graining here relies on LMR.
Another way of expressing the conjunct added is to state that ``only $A$ and all the observables implied to be measured by the measurement of $A$ have been measured''.
With this concept the following can now be shown.\footnote{In the remainder of this paper it is tacitly assumed that $\h$ is finite-dimensional.}

\begin{theorem}
The classical logic for experimental propositions is given by\footnote{The exclamation mark is merely a notational artifact introduced to discern the elements of $S^!_{QM}$ from those of $S_{QM}$. That is, without the exclamation mark, $S^!_{QM}$ would be a subset of $S_{QM}$.}
\begin{equation}
  CL_{QM}=\mathcal{P}(S^!_{QM}),~S^!_{QM}:=\left\{(\A^!,P)\:;\:\A\in\mathfrak{A}, P\in L_0(\A)\right\},
\end{equation}
where $L_0(\A)$ denotes the set of atoms of the lattice $L(\A)$. 
\end{theorem}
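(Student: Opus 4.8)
The plan is to mirror the construction of $L_{QM}$ in the previous theorem, but starting from the sharp propositions $M_A^!(\Delta)$ of \eqref{prov} in place of the $M_A(\Delta)$, and then to argue that the formal introduction of disjunctions now produces a complete atomic Boolean algebra rather than merely a Heyting algebra. First I would record the elementary logical relations among the sharp propositions that follow from their reading. Two facts are essential. Within a fixed context $\A$ the propositions $M_A^!(\Delta)$ reproduce the Boolean lattice $L(\A)$, since conjoining and disjoining outcome regions tracks the meet and join of the corresponding projections in $\A$. In contrast with the $M_A(\Delta)$, however, the added clause ``no finer grained measurement has been performed'' destroys the cross-context monotonicity that LMR supplied: a sharp proposition fixes its context exactly, so $M_{A_1}^!(\Delta_1)$ and $M_{A_2}^!(\Delta_2)$ are mutually exclusive whenever the contexts differ, and also whenever the contexts agree but $P_{A_1}^{\Delta_1}\wedge P_{A_2}^{\Delta_2}=0$.

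Next I would identify the atoms. Combining the two observations, the finest non-contradictory sharp propositions are exactly those of the form $(\A^!,P)$ with $P\in L_0(\A)$: the context is pinned down completely and the outcome is a minimal nonzero projection of $L(\A)$, so nothing can be strictly refined. These are precisely the elements of $S^!_{QM}$, and by the exclusivity just noted they are pairwise disjoint. This step also explains why $S^!_{QM}$ ranges over all of $\mathfrak{A}$ rather than only the maximal algebras: under the sharp reading every context is terminal, whereas for the unsharp $M_A(\Delta)$ a finer measurement always refines $(\A,P)$ by LMR, so there only the maximal contexts would contribute atoms.

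I would then establish that the logic is complete, atomic, and Boolean with $S^!_{QM}$ as its set of atoms. Every sharp proposition decomposes as the finite disjunction $M_A^!(\Delta)=\bigvee\{(\A^!,P)\:;\:P\in L_0(\A),\,P\leq P_A^\Delta\}$, in analogy with \eqref{relation}; formally admitting arbitrary disjunctions of atoms---the same modelling move that produced the completeness of $L_{QM}$---then supplies an element for every subset of $S^!_{QM}$. Because the atoms are mutually exclusive, negation becomes the genuine set-complement rather than the intuitionistic pseudocomplement, the law of excluded middle is restored, and the lattice is Boolean. A complete atomic Boolean algebra is isomorphic to the power set of its atoms, so the algebra is identified with $\mathcal{P}(S^!_{QM})$, as claimed.

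The step I expect to be the main obstacle is the second one: justifying rigorously, from the reading \eqref{prov} and LMR alone, that distinct contexts genuinely yield mutually exclusive sharp propositions and that the $(\A^!,P)$ with $P$ atomic are truly minimal. This is where the conceptual weight lies, since it is exactly the failure of this disjointness for the $M_A(\Delta)$---their monotone refinement under LMR---that rendered $L_{QM}$ non-classical. Once disjointness and minimality are in hand, the passage to $\mathcal{P}(S^!_{QM})$ is the routine identification of a complete atomic Boolean algebra with the power set of its atoms.
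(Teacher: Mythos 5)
Your construction of the Boolean algebra itself is sound and parallels the first half of the paper's proof: the paper likewise reads each singleton $\{(\A^!,P)\}$ as a sharp proposition $M_A^!(\{a\})$ and each subset as a disjunction of these (the identification \eqref{ident}), and your observation that under the sharp reading every context is terminal---so that all of $\mathfrak{A}$, not just the maximal algebras, contributes atoms---is also the paper's own gloss, echoed in its comparison with the Bruns--Lakser extension $\mathcal{P}(S^*_{QM})$.

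There is, however, a genuine gap: you never show that this classical logic is adequate for the \emph{original} experimental propositions. The theorem claims $\mathcal{P}(S^!_{QM})$ is the classical logic for experimental propositions, and these include the unsharp $M_A(\Delta)$ of \eqref{reading} and all the disjunctions already admitted into $L_{QM}$---not merely the sharp $M_A^!(\Delta)$ of \eqref{prov} from which you generate the power set. The paper treats this as a separate, explicit proof obligation (``rich enough to incorporate the propositions already introduced by $L_{QM}$'') and discharges it by exhibiting the map
\begin{equation*}
  c:L_{QM}\to CL_{QM},\qquad S\mapsto\bigcup_{\A\in\mathfrak{A}}\left\{(\A^!,P)\in S^!_{QM}\:;\:P\leq S(\A)\right\}
\end{equation*}
and checking that it is an injective complete homomorphism. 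This is not a routine addendum: it encodes $M_A(\Delta)$ as the set of \emph{all} sharp refinements $(\A'^!,P')\leq(\A,P)$ ranging over finer contexts $\A'\supset\A$, and the injectivity together with preservation of arbitrary meets and joins is what guarantees that passing to the classical logic loses none of the distinctions drawn in $L_{QM}$. Without this step you have only shown that the sharp propositions generate $\mathcal{P}(S^!_{QM})$ as \emph{their} classical logic, which does not yet justify the theorem's claim. Conversely, the place where you anticipate the main difficulty---rigorously deriving mutual exclusivity of distinct contexts from the reading \eqref{prov}---is where the paper spends no formal effort at all; it takes that as unproblematic semantics, and the mathematical weight sits entirely in the homomorphism $c$.
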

\begin{proof}
Being defined as a power set, it is clear that $CL_{QM}$ is a classical propositional lattice. 
So it only has to be shown that every element signifies an experimental proposition, and that it is rich enough to incorporate the propositions already introduced by $L_{QM}$. 
The first issue is readily established by the identification
\begin{equation}\label{ident}
  M_A^!(\Delta)\mapsto \{(\A^!,P)\in S^!_{QM}\:;\:P\leq \mu_A(\Delta)\},
\end{equation}
with $\A$ the algebra generated by $A$.
Identifying $\A$ with $A$ is again justified by LMR, and IEA is incorporated by the fact that, whenever $\Delta$ contains no elements in the spectrum of $A$, $M_A^!(\Delta)$ is identified with the empty set.
Thus \eqref{ident} ensures that every element of $CL_{QM}$ is understood as a disjunction of experimental propositions, i.e., every singleton set in $CL_{QM}$ encodes a proposition of the form $M_A^!(\{a\})$.

That $CL_{QM}$ is indeed rich enough follows by observing that the map
\begin{equation}
  c:L_{QM}\to CL_{QM},~S\mapsto\bigcup_{\A\in\mathfrak{A}}\left\{(\A^!,P)\in S^!_{QM}\:;\:P\leq S(\A)\right\}
\end{equation}
is an injective complete homomorphism.
\end{proof}

Some additional reflections are helpful to get a grip on $CL_{QM}$.
First one may note that the map $(c\circ i):S_{QM}\to CL_{QM}$, encoding propositions of the form $M_A(\Delta)$, acts as
\begin{equation}\label{ident2}
  c\circ i:(\A,P)\mapsto \left\{(\A'^!,P')\in S^!_{QM}\:;\:(\A',P')\leq (\A,P)\right\}.
\end{equation}
Thus the provision ``and no finer grained measurement has been performed'' that distinguishes $M_A^!(\Delta)$ from $M_A(\Delta)$ precisely excludes all the $(\A'^!,P')\in (c\circ i)(\A,P) $ with $\A'\neq\A$, i.e., the finer grained measurements (c.f. \eqref{ident} and \eqref{ident2}). 

The non-classicality of $L_{QM}$ is also explicated by $CL_{QM}$.
For any $(\A,P)\in S_{QM}$, the complement of 
\begin{equation}
  c(S_{(\A,P)}\vee\neg S_{(\A,P)})
\end{equation}
consists of all the $(\A'^!,P')\in S^!_{QM}$ with $\A'\subsetneq\A$ and $P'\leq P$.
This is because only the added provision in \eqref{prov} makes $(\A'^!,P')$ exclude the measurement of $A$, while within $L_{QM}$ there is nothing to denote a possible incompatibility for $A$ and $A'$.
A proposition in $CL_{QM}$ of special interest in this context is the singleton $\{(\mathbb{C}1,1)\}$ which expresses that only the trivial measurement is performed.
In light of IEA this is just the same as saying that no measurement is performed; the most typical proposition in $CL_{QM}$ not present in $L_{QM}$.  

To conclude this section I compare the constructions of $L_{QM}$ and $CL_{QM}$ to another way of extending $S_{QM}$ to a distributive lattice, namely, by using the Bruns-Lakser theory of distributive hulls \cite{BrunsLakser70} as advocated in \cite{Coecke02}.
In this approach $S_{QM}$ is embedded into the lattice of its distributive ideals\footnote{A distributive ideal is a subset $I\subset S_{QM}$ that is downward closed and further contains the join $\bigvee\{(\A,P)\in J\}$ for every subset $J\subset I$ that satisfies $\left(\bigvee_{(\A,P)\in J}(\A,P)\right)\wedge(\A',P')= \bigvee_{(\A,P)\in J}\left((\A,P)\wedge(\A',P')\right)$ for all $(\A',P')\in S_{QM}$.} $\mathcal{DI}(S_{QM})$ by taking each element to its downward set, i.e.,
\begin{equation}
  (\A,P)\mapsto\downarrow(\A,P):=\{(\A',P')\in S_{QM}\:;\:(\A',P')\leq(\A,P)\}.
\end{equation}

Generally, the abstract nature of the extension $\mathcal{DI}(L)$ for some non-distributive lattice $L$ prevents a straightforward identification of distributive ideals with propositions.
In the present situation, for example, it is not immediately clear whether the resulting lattice operations behave consistently with respect to the intended reading of the elements $\downarrow(\A,P)$ given by \eqref{reading}. 
Fortunately, this issue can be resolved for $S_{QM}$ by noting that the resulting lattice of its distributive ideals is isomorphic to the power set of the atoms in $S_{QM}$:
\begin{equation}
  \mathcal{DI}(S_{QM})\simeq\mathcal{P}(S^*_{QM}),~S^*_{QM}:=\{(\A,P)\in S_{QM}\:;\:\A\in \mathfrak{A}_{\mathrm{max}},P\in L_0(\A)\},
\end{equation}
where $\mathfrak{A}_{\mathrm{max}}$ is the set of maximal Abelian algebras.
Here then one sees that the proposition identified with $(\A,P)\in S_{QM}$ is understood in  $\mathcal{P}(S^*_{QM})$ as a disjunction over all possible \emph{complete} measurements compatible with a measurement of $A$. 
That is, $M_A(\Delta)$ in this setting may be rephrased as ``some complete measurement $A'$ has been performed with outcome $a'$ which implies $M_A(\Delta)$''.
Although this provides a consistent reading, the underlying presupposition that every measurement in fact constitutes a complete measurement seems somewhat reckless.


\section{Quantum logic and probability}\label{QProbsec}
If the logic $CL_{QM}$ is to play any significant role in quantum mechanics, it has to be connected with quantum probability theory.
That is, one has to know how the Born rule can be expressed using the language of $CL_{QM}$.
Ideally this consists of two parts.
First, to show that probability functions on $CL_{QM}$ can be introduced that are in accordance with the Born rule.
And secondly, to show that the Born rule comes out as a necessary consequence holding for all probability functions on $CL_{QM}$ (i.e., a result similar to Gleason's theorem for $L(\h)$ (\cite{Gleason57})).  
In this section some first investigations in this direction are made.

The first approach for probabilities on $CL_{QM}$ undoubtedly is to note that $(S^!_{QM},CL_{QM})$ is a measure space.
Hence, it is tempting to state that all probability functions are simply all probability measures on this space.
However, this does not do justice to the quantum formalism.
For example, one possible probability measure is the Dirac measure peaked at $\{(\mathbb{C}1,1)\}$ expressing certainty that no measurement will be performed.
This is likely to be too minimalistic even for the hardened instrumentalist. 
That is, although one may adhere to the idea that `unperformed experiments have no results' \cite{Peres78}, a central idea in QM is that possible outcomes for measurements do have probabilities irrespective of whether the measurements are performed.
Indeed, it is a trait of QM that certain conditional probabilities have definite values even if the propositions on which one conditions have prior probability zero.
The Dirac measure does not provide these conditional probabilities.

To get around this it is natural to take conditional probabilities as primitive in the quantum case.\footnote{Quantum mechanics has even been used as an example to advocate that conditional probability should be considered more fundamental than unconditional probability, c.f. \cite{Hajek03}.}
For this approach the axiomatic scheme of R\'enyi is an appropriate choice.\footnote{C.f., \cite{Renyi55}. A similar axiomatic approach was developed independently by Popper \cite[A *ii--*v]{Popper59}. The present presentation is taken from \cite{Spohn86}.}

\begin{definition}
Let $(\Omega,\mathcal{F})$ be a measurable space and $\mathcal{C}\subset\mathcal{F}\backslash\{\varnothing\}$ be a non-empty set, then $(\Omega,\mathcal{F},\mathcal{C},\pee)$ is called a \emph{conditional probability space (CPS)} if $\pee:\mathcal{F}\times\mathcal{C}\to[0,1]$ satisfies
\begin{enumerate}
\item for every $C\in\mathcal{C}$ the function $A\mapsto\pee(A|C)$ is a probability measure on $(\Omega,\mathcal{F})$,
\item for all $A,B,C\in\mathcal{F}$ with $C,B\cap C\in\mathcal{C}$ one has
\begin{equation}
  \pee(A\cap B|C)=\pee(A|B\cap C)\pee(B|C).
\end{equation}
\end{enumerate}
If in addition to these criteria $\mathcal{C}$ is closed under finite unions, the space is called an \emph{additive CPS}, and if $\mathcal{C}=\mathcal{F}\backslash\{\varnothing\}$ it is called a \emph{full CPS}.
\end{definition}

In the present situation it is straightforward to take $(S^!_{QM},CL_{QM})$ as the measure space.
For the set of conditions a modest choice is to take the set of propositions expressing the performance of a measurement: 
\begin{equation}
  \mathcal{C}_{QM}:=\left\{F_{\A}\:;\:\A\in\mathfrak{A}\right\},~F_{\A}:=(c\circ i)(\A,1).
\end{equation}
For this set of conditions the Born rule can easily be captured with the CPS $(S^!_{QM},CL_{QM},\mathcal{C}_{QM},\pee_{\rho})$ with the probability function given by
\begin{equation}\label{qmcpss}
  \pee_\rho(F|F_{\A})
  :=\sum_{\left\{\substack{P\in L(\h)\:;\\ (\A^!,P)\in F}\right\}}\mathrm{Tr}(\rho P),\\
\end{equation}
for $F\in CL_{QM}$ and with $\rho$ a density operator on $\h$.\footnote{Another possibility is to take for the conditions the set of propositions expressing the performance of a measurement excluding the possibility of a finer grained measurement: 
\begin{equation*}
\mathcal{C}_{QM}^!:=\left\{F^!_{\A}\:;\:\A\in\mathfrak{A}\right\},~F^!_{\A}:=\{(\A^!,P)\in S^!_{QM}\:;\:P\in L(\A)\}.
\end{equation*}
Also in this setting the Born rule can be recovered by taking the CPS $(S^!_{QM},CL_{QM},\mathcal{C}^!_{QM},\pee^!_{\rho})$ with the probability function given by
\begin{equation*}
  \pee_\rho^!(F|F^!_{\A})
  :=\sum_{\left\{\substack{P\in L(\h)\:;\\ (\A^!,P)\in F}\right\}}\mathrm{Tr}(\rho P)=\pee_\rho(F|F_{\A}).
\end{equation*}
Much of the remainder of this section can also be phrased in terms of this CPS.
I will however restrict myself to \eqref{qmcpss} for the sake of clarity.} 

It would however be more interesting to take $\mathcal{C}=CL_{QM}\backslash\{\varnothing\}$ for the set of conditions.
At the present it is not clear to me if this is possible, as extending the set of conditions seems a non-trivial matter.
There is a positive result on this matter however, namely, the CPS
$(S^!_{QM},CL_{QM},\mathcal{C}_{QM},\pee_{\rho})$ can be extended to
an additive CPS.
The proof for this claim relies on the following definition and theorem.

\begin{definition}
A family of measures $(\mu_i)_{i\in I}$ on $(\Omega,\mathcal{F})$ is called a \emph{generating family of measures for the CPS}  $(\Omega,\mathcal{F},\mathcal{C},\pee)$ if and only if for all $C\in\mathcal{C}$ there exists $i\in I$ such that $0<\mu_i(C)<\infty$ and for all $F\in\mathcal{F}$, for all $C\in\mathcal{C}$ and for all $i\in I$ with $0<\mu_i(C)<\infty$
\begin{equation}
  \pee(F|C)=\frac{\mu_i(F\cap C)}{\mu_i(C)}.
\end{equation}
The family is called \emph{dimensionally ordered} if and only if there is a total order $<$ on $I$ such that for all $F\in\mathcal{F}$ and $i,j\in I$: $i<j$ $\mu_j(F)=0$ whenever $\mu_i(F)<\infty$. 
\end{definition}

\begin{theorem}[Cs\'asz\'ar 1955 \cite{Csaszar55}]
A CPS $(\Omega,\mathcal{F},\mathcal{C},\pee)$ can be extended to an additive CPS $(\Omega,\mathcal{F},\mathcal{C}',\pee')$ (i.e., $\mathcal{C}\subset\mathcal{C}'$, and $\pee'$ acts as $\pee$ when restricted to $\mathcal{F}\times\mathcal{C}$) if and only if it is generated by a dimensionally ordered family of measures on $(\Omega,\mathcal{F})$.
\end{theorem}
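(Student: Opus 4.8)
The statement is a biconditional, so the plan is to prove the two implications separately; building an additive extension from a dimensionally ordered family is routine, whereas extracting such a family from an additive CPS carries the real content. \textbf{Sufficiency.} Suppose $(\Omega,\mathcal{F},\mathcal{C},\pee)$ is generated by a dimensionally ordered family $(\mu_i)_{i\in I}$. I would take $\mathcal{C}'$ to be the set of all finite unions of members of $\mathcal{C}$, which contains $\mathcal{C}$ and is closed under finite unions by construction. The dimensional ordering first gives that the level of a set is unique: no $F$ can have positive finite measure under two distinct $\mu_i$ (if $i<j$ and $\mu_i(F)<\infty$ then $\mu_j(F)=0$). For $C=C_1\cup\dots\cup C_n\in\mathcal{C}'$ I would let $i(C)$ be the largest of the levels of the $C_k$; at that index every lower-level $C_k$ has zero $\mu_{i(C)}$-measure, so $0<\mu_{i(C)}(C)<\infty$, and I set $\pee'(F\mid C):=\mu_{i(C)}(F\cap C)/\mu_{i(C)}(C)$. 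Since the generating family returns the same ratio at every admissible index, $\pee'$ is well defined and restricts to $\pee$ on $\mathcal{C}$, and axiom~(1) is inherited from $\mu_{i(C)}$ being a measure. The only genuine computation is axiom~(2): uniqueness of levels forces $i(B\cap C)=i(C)$ whenever $\mu_{i(C)}(B\cap C)>0$, so the product rule collapses to an identity of ratios, while if $\mu_{i(C)}(B\cap C)=0$ both sides vanish.

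\textbf{Necessity.} Given an additive extension $(\Omega,\mathcal{F},\mathcal{C}',\pee')$ I would stratify $\mathcal{C}'$ into dimension classes and build one measure per class. For $B,C\in\mathcal{C}'$ set $B\approx C$ iff $\pee'(B\mid B\cup C)>0$ and $\pee'(C\mid B\cup C)>0$, and $B\prec C$ iff $\pee'(B\mid B\cup C)=0$; these make sense because $B\cup C\in\mathcal{C}'$. From $\pee'(B\mid B\cup C)+\pee'(C\mid B\cup C)=1+\pee'(B\cap C\mid B\cup C)\geq 1$ the two conditionals cannot both vanish, so for every pair exactly one of $B\approx C$, $B\prec C$, $C\prec B$ holds. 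Reflexivity and symmetry of $\approx$ are immediate, while transitivity of $\approx$ and of $\prec$ follow from the chain rule $\pee'(A\mid D)=\pee'(A\mid C)\pee'(C\mid D)$ for $A\subseteq C\subseteq D$, which is the special case $A\cap C=A$, $C\cap D=C$ of axiom~(2). Passing to the quotient $I:=\mathcal{C}'/\!\approx$ then yields a total order $<$ induced by $\prec$.

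For each class $i\in I$ I would construct $\mu_i$ by fixing a representative $C_i$, normalizing $\mu_i(C_i):=1$ with $\mu_i(F):=\pee'(F\mid C_i)$ for $F\subseteq C_i$, and extending over the remaining $C\approx C_i$ by $\mu_i(F):=\pee'(F\mid C\cup C_i)/\pee'(C_i\mid C\cup C_i)$; the chain rule guarantees these local definitions agree on overlaps and pin the relative normalizations uniquely. One then puts $\mu_i(F):=\sup\{\mu_i(F\cap C)\:;\:C\approx C_i\}$ on all of $\mathcal{F}$, so that $\mu_i$ is finite and positive exactly on the class-$i$ conditions, vanishes on the strictly finer ones and is infinite on the strictly coarser ones. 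By construction $\pee'(F\mid C)=\mu_{d(C)}(F\cap C)/\mu_{d(C)}(C)$, with $d(C)$ the class of $C$, and restricting to $\mathcal{C}\subseteq\mathcal{C}'$ exhibits $(\mu_i)_{i\in I}$ as a generating family for the original CPS; the clause $i<j,\ \mu_i(F)<\infty\Rightarrow\mu_j(F)=0$ is then a verification against the levels.

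\textbf{Main obstacle.} The crux is this last construction of each $\mu_i$ as a genuine countably additive measure on all of $\mathcal{F}$ that simultaneously restricts correctly to every $C\approx C_i$. Consistency of the patching is governed by axiom~(2), but countable additivity must be recovered from axiom~(1) for the individual conditionals together with a monotone-limit argument over the directed family of class-$i$ conditions; in particular one must check separately that this family is directed under unions, i.e. that $C,C'\approx C_i$ forces $C\cup C'\approx C_i$, so that the supremum defining $\mu_i$ is taken over a net rather than an unstructured set. Verifying the dimensional ordering afterwards is comparatively mechanical, but distilling a single well-defined measure out of the web of pairwise-consistent conditionals is where the difficulty concentrates.
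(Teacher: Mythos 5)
The paper never proves this statement: it is imported wholesale from Cs\'asz\'ar's work via the citation \cite{Csaszar55} and then applied to $(S^!_{QM},CL_{QM},\mathcal{C}_{QM},\pee_\rho)$, so your attempt must be judged on its own merits rather than against a proof in the text. Your sufficiency half is correct: with $\mathcal{C}'$ the finite unions of members of $\mathcal{C}$, dimensional ordering makes the level $i(C)$ intrinsic (it is the \emph{unique} $i$ with $0<\mu_i(C)<\infty$, independent of the decomposition), and your two-case verification of the multiplication axiom goes through.

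The necessity half follows the right skeleton (commensurability classes, chain-rule patching, suprema over the directed class), but its final step has a genuine gap: the raw measures $\mu_i$ you build are in general \emph{neither} dimensionally ordered \emph{nor} a generating family, and your claim that $\mu_i$ ``is infinite on the strictly coarser'' conditions is false. Counterexample: $\Omega=[0,1]^2$, $\mathcal{F}$ the Borel sets, $\mathcal{C}=\mathcal{C}'=\{S_0,\Omega\}$ with $S_0=[0,1]\times\{0\}$, $\pee'(\cdot\,|S_0)$ the normalized length on $S_0$ and $\pee'(\cdot\,|\Omega)$ the area; one checks directly that this is an additive CPS with two classes, $[S_0]$ below $[\Omega]$ since $\pee'(S_0|\Omega)=0$. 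Your construction yields $\mu_1(F)=\lambda_1(F\cap S_0)$ and $\mu_2(F)=\lambda_2(F)$ (length and area), so $\mu_1(\Omega)=1$: finite, not infinite. Consequently $(\mu_1,\mu_2)$ is not a generating family — because $0<\mu_1(\Omega)<\infty$, the definition would force $\pee'(F|\Omega)=\mu_1(F\cap\Omega)/\mu_1(\Omega)=\lambda_1(F\cap S_0)$, which contradicts $\pee'(F|\Omega)=\lambda_2(F)$ — and it is dimensionally ordered under neither total order ($\mu_1(\Omega)<\infty$ yet $\mu_2(\Omega)=1\neq0$, and $\mu_2(S_0)<\infty$ yet $\mu_1(S_0)=1\neq0$). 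The missing idea is an $\infty$-correction: replace $\mu_i$ by $\tilde\mu_i:=\mu_i+\sum_{j>i}\infty\cdot\mu_j$ (convention $\infty\cdot0=0$), i.e.\ declare $\tilde\mu_i(F)=\infty$ as soon as some strictly higher class charges $F$. Each $\tilde\mu_i$ is still a measure; it agrees with $\mu_i$ on every class-$i$ condition and all its measurable subsets, because those sets are $\mu_j$-null for every $j>i$ (this part \emph{does} follow from your chain-rule argument); hence the $\tilde\mu_i$ generate $\pee'$, and dimensional ordering holds by construction. Your stated ``main obstacle'' (directedness of each class and countable additivity of the supremum) is real but resolves exactly as you sketch; the step that actually fails is the one above. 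It is worth noting that the paper's own application of the theorem wrestles with the same point: the Kronecker delta $\delta_{id(F)}$ in its family $\mu_i^\rho$ is an attempt to enforce dimensional ordering, whereas an $\infty$-valued correction of the above kind is what keeps the family simultaneously additive, generating, and dimensionally ordered.
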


Thus to show that the CPS given by $\pee_\rho$ in \eqref{qmcpss} can be extended to an additive CPS it suffices to give a dimensionally ordered family of measures  that generates it.
The construction of this family relies on the following dimensionality function
\begin{equation}
\begin{split}
  d:\mathfrak{A}\to\mathbb{N},&~d(\A):=\# L_0(\A),\\
  d:CL_{QM}\to\mathbb{N},&~d(F):=\min\{d(\A)\:;\:(\A^!,P)\in F\}.
\end{split}
\end{equation}
That is, $d$ picks out the coarsest measurements in $F$. 
In particular, $d(F_{\A})=d(\A)$.
The family of measures $\left(\mu_i^\rho\right)_{i\in\mathbb{N}}$ is now defined as
\begin{equation}
   \mu_i^\rho(F):=\delta_{id(F)}\sum_{\left\{\substack{P\in L(\h)\:;\:\exists\A\in\mathfrak{A}\text{ s.t.}\\(\A^!,P)\in F, d(\A)=i}\right\}}\mathrm{Tr}(\rho P),
\end{equation}
where $\delta_{id(F)}$ denotes the Kronecker delta.
This Kronecker delta warrants that the family is indeed dimensionally ordered where the order is the usual one on $\mathbb{N}$.
Using that
\begin{equation}
   \mu_i^\rho(F_{\A})=\delta_{id(\A)}
\end{equation}
one may easily check that this is indeed a generating family for the CPS $(S^!_{QM},CL_{QM},\mathcal{C}_{QM},\pee_\rho)$.

Another open issue with the CPS approach to quantum probabilities is that of deriving the Born rule, i.e., to derive the conditions under which $\pee$ is of the form $\pee_{\rho}$.
A particular difficulty here is that quantum probabilities are non-contextual, a property that is not expected to hold in general for an admissible probability function on $(S^!_{QM},CL_{QM})$.
That is, one may have that for $\A_1,\A_2\supset\A$ the equality
\begin{equation}\label{ineqB}
  \pee\left(\{(\A^!,P)\}\middle|F_{\A_1}\right)=\pee\left(\{(\A^!,P)\}\middle|F_{\A_2}\right),
\end{equation}
does not hold, although it is satisfied for the Born rule \eqref{qmcpss}.

Conversely, if in addition to \eqref{ineqB} the equality
\begin{equation}
  \pee\left(\{(\A_1^!,P)\}\middle|F_{\A}\right)=\pee\left(\{(\A_2^!,P)\}\middle|F_{\A}\right)
\end{equation}
were also to hold whenever $\A_1,\A_2\subset\A$, the Born rule would follow.\footnote{Provided that the dimension of $\h$ is greater than $2$.}
This is because then the function
\begin{equation}
  \pee_{QM}:L(\h)\to[0,1],~\pee_{QM}(P):=\pee\left(\{(\A^!,P)\}\middle|F_{\A'}\right)
\end{equation}
with $\A'\supset\A$ is well defined (i.e., independent of the choice of $\A$ and $\A'$) and satisfies
\begin{itemize}
\item $\pee_{QM}(1)=1$,
\item $\pee_{QM}(P_1\vee P_2\vee\ldots)=\pee(P_1)+\pee(P_2)+\ldots$ \quad whenever $P_iP_j=0$ for all $i\neq j$.
\end{itemize} 
Hence, Gleason's theorem applies.

Now, before coming to an end it is interesting to take a quick look at the connection between possible extensions of the CPSs and Bell inequalities.
Consider the standard EPR-Bohm situation where Alice can choose between two possible measurements $A^A_1$  and $A^A_2$, whilst Bob can choose between two possible measurements $A^B_1$ and $A^B_2$. 
Let $\A_{ij}$ denote the algebra generated by $\mathcal{A}^A_i$ and $\mathcal{A}^B_j$. 
It is further presupposed that each measurement has two possible outcomes associated with the projections $P^X_{i0}$ and $P^X_{i1}$ in $\A^X_ i$.

The CPSs given by \eqref{qmcpss} aren't rich enough to evaluate possible derivations for a Bell inequality.
For that, the set of conditions has to be extended to
\begin{equation}
  \mathcal{C}_{QM}':=\left\{F_{(\A,P)}\:;\:(\A,P)\in S_{QM}\right\},~F_{(\A,P)}:=(c\circ i)(\A,P).\\
\end{equation}
With these conditions the requirements of outcome independence (OI) and parameter independence (PI) from which the Bell inequality can be derived \cite{Jarrett84,Shimony84} can be stated:
\begin{equation}
\begin{split}
  \pee\left(\{(\A^{A!}_i,P^A_{i0})\}\middle|F_{(\A_{i1},1)}\right)
  &=
  \pee\left(\{(\A^{A!}_i,P^A_{i0})\}\middle|F_{(\A_{i2},1)}\right)\\
  \pee\left(\{(\A^{B!}_i,P^B_{i0})\}\middle|F_{(\A_{1i},1)}\right)
  &=
  \pee\left(\{(\A^{B!}_i,P^B_{i0})\}\middle|F_{(\A_{2i},1)}\right) 
\end{split} \tag{PI}
\end{equation}
\begin{equation}
\begin{split}
  \pee\left(\{(\A^{A!}_i,P^A_{i0})\}\middle|F_{(\A_{ij},1)}\right)
  &=
  \pee\left(\{(\A^{A!}_i,P^A_{i0})\}\middle|F_{(\A_{ij},P^B_{jk})}\right)\\
  \pee\left(\{(\A^{B!}_i,P^B_{i0})\}\middle|F_{(\A_{ji},1)}\right)
  &=
  \pee\left(\{(\A^{B!}_i,P^B_{i0})\}\middle|F_{(\A_{ji},P^A_{jk})}\right).
\end{split} \tag{OI}
\end{equation}
The violation of Bell inequalities in QM shows that OI must fail for any extension of $\pee_{\rho}$ to $\mathcal{C}_{QM}'$.
This may be indicative of the idea that the development of \emph{any} full CPS $(S^!_{QM},CL_{QM},CL_{QM}\backslash\{\varnothing\},\pee)$ is a non-trivial matter.\footnote{Note that I'm thinking here of interesting cases. A trivial CPS is easily constructed. Just fix one $(\A !,P)\in S^!_{QM}$ and take $\pee(F|C)=\begin{cases}1&(\A^!,P)\in F,\\0& \text{otherwise},\end{cases}$ \quad for all $C\in CL_{QM}\backslash\{\varnothing\}$.}


\section{Conclusion and outlook}\label{conclusionsection}
With the construction of $CL_{QM}$ a concrete formalism is provided that adheres to the Bohrian view that also for quantum mechanics experimental reports must be communicable in unambiguous common language.
However, the association with the Copenhagen doctrine ends there.
Although the notion of measurement is taken as primitive here, no demand is made for the necessity of this.
That is, the current program should not be seen as a method to dismiss the measurement problem, but merely as a way of making sense of the formalism without endorsing a particular solution to the problem.
In this paper one obstacle in such investigations has been overcome, namely, the non-distributivity of quantum logic.
In return a less abstract obstacle has emerged, namely, the justification of the Born rule in quantum mechanics.
To get a further grip on this issue it may be fruitful future work to seek connections with other programs in quantum foundations such as generalized probability theory \cite{Barrett07} or the Bayesian approach to quantum probabilities \cite{Fuchs10}.


\section*{Acknowledgments}
I would like to thank the people at the QPL 2012 conference in Brussels and at the first Quantum Toposophy workshop in Nijmegen for useful questions and discussions. 
And further J. W. Romeijn for insisting on the use of classical logic.
This work was supported by the NWO (Vidi project nr. 016.114.354).

\bibliographystyle{eptcs}    
\bibliography{qplrefs}   

\begin{thebibliography}{10}
\providecommand{\bibitemdeclare}[2]{}
\providecommand{\surnamestart}{}
\providecommand{\surnameend}{}
\providecommand{\urlprefix}{Available at }
\providecommand{\url}[1]{\texttt{#1}}
\providecommand{\href}[2]{\texttt{#2}}
\providecommand{\urlalt}[2]{\href{#1}{#2}}
\providecommand{\doi}[1]{doi:\urlalt{http://dx.doi.org/#1}{#1}}
\providecommand{\bibinfo}[2]{#2}

\bibitemdeclare{article}{Barrett07}
\bibitem{Barrett07}
\bibinfo{author}{J.~\surnamestart Barrett\surnameend} (\bibinfo{year}{2007}):
  \emph{\bibinfo{title}{Information processing in generalized probabilistic
  theories}}.
\newblock {\sl \bibinfo{journal}{Physical Review A}} \bibinfo{volume}{75}, p.
  \bibinfo{pages}{032304}, \doi{10.1103/PhysRevA.75.032304}.

\bibitemdeclare{article}{Bell64}
\bibitem{Bell64}
\bibinfo{author}{J.~S. \surnamestart Bell\surnameend} (\bibinfo{year}{1964}):
  \emph{\bibinfo{title}{On the {E}instein {P}odolsky {R}osen paradox}}.
\newblock {\sl \bibinfo{journal}{Physics}}
  \bibinfo{volume}{1}(\bibinfo{number}{3}), pp. \bibinfo{pages}{195--200}.

\bibitemdeclare{article}{BirkhoffNeumann36}
\bibitem{BirkhoffNeumann36}
\bibinfo{author}{G.~\surnamestart Birkhoff\surnameend} \&
  \bibinfo{author}{J.~\surnamestart von Neumann\surnameend}
  (\bibinfo{year}{1936}): \emph{\bibinfo{title}{The logic of quantum
  mechanics}}.
\newblock {\sl \bibinfo{journal}{Annals of Mathematics}}
  \bibinfo{volume}{37}(\bibinfo{number}{4}), pp. \bibinfo{pages}{823--843},
  \doi{10.2307/1968621}.

\bibitemdeclare{article}{Bohr48}
\bibitem{Bohr48}
\bibinfo{author}{N.~\surnamestart Bohr\surnameend} (\bibinfo{year}{1948}):
  \emph{\bibinfo{title}{On the notions of causality and complementarity}}.
\newblock {\sl \bibinfo{journal}{Dialectica}}
  \bibinfo{volume}{2}(\bibinfo{number}{3-4}), pp. \bibinfo{pages}{312--319},
  \doi{10.1111/j.1746-8361.1948.tb00703.x}.

\bibitemdeclare{article}{BrunsLakser70}
\bibitem{BrunsLakser70}
\bibinfo{author}{G.~\surnamestart Bruns\surnameend} \&
  \bibinfo{author}{H.~\surnamestart Lakser\surnameend} (\bibinfo{year}{1970}):
  \emph{\bibinfo{title}{Injective hulls of semilattices}}.
\newblock {\sl \bibinfo{journal}{Canadian Mathematical Bulletin}}
  \bibinfo{volume}{13}, pp. \bibinfo{pages}{115--118},
  \doi{10.4153/CMB-1970-023-6}.

\bibitemdeclare{article}{CHLS09}
\bibitem{CHLS09}
\bibinfo{author}{M.~\surnamestart Caspers\surnameend},
  \bibinfo{author}{C.~\surnamestart Heunen\surnameend}, \bibinfo{author}{N.~P.
  \surnamestart Landsman\surnameend} \& \bibinfo{author}{B.~\surnamestart
  Spitters\surnameend} (\bibinfo{year}{2009}):
  \emph{\bibinfo{title}{Intuitionistic quantum logic of an {$n$}-level
  system}}.
\newblock {\sl \bibinfo{journal}{Foundations of Physics}} \bibinfo{volume}{39},
  pp. \bibinfo{pages}{731--759}, \doi{10.1007/s10701-009-9308-7}.

\bibitemdeclare{article}{Clauser69}
\bibitem{Clauser69}
\bibinfo{author}{J.~F. \surnamestart Clauser\surnameend},
  \bibinfo{author}{M.~A. \surnamestart Horne\surnameend},
  \bibinfo{author}{A.~\surnamestart Shimony\surnameend} \&
  \bibinfo{author}{R.~A. \surnamestart Holt\surnameend} (\bibinfo{year}{1969}):
  \emph{\bibinfo{title}{Proposed experiment to test local hidden-variable
  theories}}.
\newblock {\sl \bibinfo{journal}{Physical Review Letters}}
  \bibinfo{volume}{23}(\bibinfo{number}{15}), pp. \bibinfo{pages}{880--884},
  \doi{10.1103/PhysRevLett.23.880}.

\bibitemdeclare{article}{Coecke02}
\bibitem{Coecke02}
\bibinfo{author}{B.~\surnamestart Coecke\surnameend} (\bibinfo{year}{2002}):
  \emph{\bibinfo{title}{Quantum logic in intuitionistic perspective}}.
\newblock {\sl \bibinfo{journal}{Studia Logica}}
  \bibinfo{volume}{70}(\bibinfo{number}{3}), pp. \bibinfo{pages}{411--440},
  \doi{10.1023/A:1015106515413}.

\bibitemdeclare{article}{Csaszar55}
\bibitem{Csaszar55}
\bibinfo{author}{{\'A}.~\surnamestart Cs{\'a}sz{\'a}r\surnameend}
  (\bibinfo{year}{1955}): \emph{\bibinfo{title}{Sur la structure des escapes de
  probabilit{\'e} conditionnelle}}.
\newblock {\sl \bibinfo{journal}{Acta Mathematica Academiae Scientiarum
  Hungarica}} \bibinfo{volume}{6}(\bibinfo{number}{3-4}), pp.
  \bibinfo{pages}{337--360}, \doi{10.1007/BF02024394}.

\bibitemdeclare{incollection}{Dummett76}
\bibitem{Dummett76}
\bibinfo{author}{M.~\surnamestart Dummett\surnameend} (\bibinfo{year}{1976}):
  \emph{\bibinfo{title}{Is logic empirical?}}
\newblock In \bibinfo{editor}{H.~D. \surnamestart Lewis\surnameend}, editor:
  {\sl \bibinfo{booktitle}{Contemporary British Philosophy Volume {IV}}},
  \bibinfo{publisher}{George Allen and Unwin}, pp. \bibinfo{pages}{45--68}.

\bibitemdeclare{}{Fuchs10}
\bibitem{Fuchs10}
\bibinfo{author}{C.~A. \surnamestart Fuchs\surnameend} (\bibinfo{year}{2010}):
  \emph{\bibinfo{title}{{QB}ism, the perimeter of {Q}uantum {B}ayesianism}}.
\newblock \urlprefix\url{http://arxiv.org/abs/1003.5209}.

\bibitemdeclare{article}{Gleason57}
\bibitem{Gleason57}
\bibinfo{author}{A.~\surnamestart Gleason\surnameend} (\bibinfo{year}{1957}):
  \emph{\bibinfo{title}{Measures on the closed subspaces of a {H}ilbert
  space}}.
\newblock {\sl \bibinfo{journal}{Indiana University Mathematics Journal}}
  \bibinfo{volume}{6}, pp. \bibinfo{pages}{885--893},
  \doi{10.1512/iumj.1957.6.56050}.

\bibitemdeclare{article}{Hajek03}
\bibitem{Hajek03}
\bibinfo{author}{A.~\surnamestart H{\'a}jek\surnameend} (\bibinfo{year}{2003}):
  \emph{\bibinfo{title}{What conditional probability could not be}}.
\newblock {\sl \bibinfo{journal}{Synthese}}
  \bibinfo{volume}{137}(\bibinfo{number}{3}), pp. \bibinfo{pages}{273--323},
  \doi{10.1023/B:SYNT.0000004904.91112.16}.

\bibitemdeclare{article}{Hermens12}
\bibitem{Hermens12}
\bibinfo{author}{R.~\surnamestart Hermens\surnameend} (\bibinfo{year}{2012}):
  \emph{\bibinfo{title}{Speakable in quantum mechanics}}.
\newblock {\sl \bibinfo{journal}{Synthese}}, \doi{10.1007/s11229-012-0158-z}.

\bibitemdeclare{article}{Jarrett84}
\bibitem{Jarrett84}
\bibinfo{author}{J.~P. \surnamestart Jarrett\surnameend}
  (\bibinfo{year}{1984}): \emph{\bibinfo{title}{On the physical significance of
  the locality conditions in the {B}ell arguments}}.
\newblock {\sl \bibinfo{journal}{No{\^u}s}}
  \bibinfo{volume}{18}(\bibinfo{number}{4}), pp. \bibinfo{pages}{569--589},
  \doi{10.2307/2214878}.

\bibitemdeclare{article}{KS67}
\bibitem{KS67}
\bibinfo{author}{S.~\surnamestart Kochen\surnameend} \& \bibinfo{author}{E.~P.
  \surnamestart Specker\surnameend} (\bibinfo{year}{1967}):
  \emph{\bibinfo{title}{The problem of hidden variables in quantum mechanics}}.
\newblock {\sl \bibinfo{journal}{Journal of Mathematics and Mechanics}}
  \bibinfo{volume}{17}, pp. \bibinfo{pages}{59--67}.

\bibitemdeclare{incollection}{Maudlin05}
\bibitem{Maudlin05}
\bibinfo{author}{T.~\surnamestart Maudlin\surnameend} (\bibinfo{year}{2005}):
  \emph{\bibinfo{title}{The tale of quantum logic}}.
\newblock In \bibinfo{editor}{Y.~\surnamestart Ben-Menahem\surnameend}, editor:
  {\sl \bibinfo{booktitle}{Hilary Putnam (Contemporary Philosophy in Focus)}},
  \bibinfo{publisher}{Cambridge University Press}, pp.
  \bibinfo{pages}{156--187}, \doi{10.1017/CBO9780511614187.006}.

\bibitemdeclare{book}{Neumann55}
\bibitem{Neumann55}
\bibinfo{author}{J.~\surnamestart von Neumann\surnameend}
  (\bibinfo{year}{1955}): \emph{\bibinfo{title}{Mathematical Foundations of
  Quantum Mechanics}}.
\newblock \bibinfo{publisher}{Princeton University Press}.
\newblock \bibinfo{note}{Translated from the German by R. T. Beyer. Original
  title: {Mathematische Grundlagen der Quantenmechanik, Berlin 1932}}.

\bibitemdeclare{article}{Peres78}
\bibitem{Peres78}
\bibinfo{author}{A.~\surnamestart Peres\surnameend} (\bibinfo{year}{1978}):
  \emph{\bibinfo{title}{Unperformed experiments have no results}}.
\newblock {\sl \bibinfo{journal}{American Journal of Physics}}
  \bibinfo{volume}{46}(\bibinfo{number}{7}), pp. \bibinfo{pages}{745--747},
  \doi{10.1119/1.11393}.

\bibitemdeclare{book}{Popper59}
\bibitem{Popper59}
\bibinfo{author}{K.~\surnamestart Popper\surnameend} (\bibinfo{year}{1959}):
  \emph{\bibinfo{title}{The Logic of Scientific Discovery}}.
\newblock \bibinfo{publisher}{Hutchinson}, \bibinfo{address}{London}.

\bibitemdeclare{article}{Putnam69}
\bibitem{Putnam69}
\bibinfo{author}{H.~\surnamestart Putnam\surnameend} (\bibinfo{year}{1969}):
  \emph{\bibinfo{title}{Is logic empirical?}}
\newblock {\sl \bibinfo{journal}{Boston Studies in the Philosophy of Science}}
  \bibinfo{volume}{V}, \doi{10.1007/978-94-010-3381-7\_5}.

\bibitemdeclare{book}{Redhead87}
\bibitem{Redhead87}
\bibinfo{author}{M.~\surnamestart Redhead\surnameend} (\bibinfo{year}{1987}):
  \emph{\bibinfo{title}{Incompleteness, Nonlocality, and Realism}}.
\newblock \bibinfo{publisher}{Clarendon Press}, \bibinfo{address}{Oxford}.

\bibitemdeclare{article}{Renyi55}
\bibitem{Renyi55}
\bibinfo{author}{A.~\surnamestart R{\'e}nyi\surnameend} (\bibinfo{year}{1955}):
  \emph{\bibinfo{title}{On a new axiomatic theory of probability}}.
\newblock {\sl \bibinfo{journal}{Acta Mathematica Academiae Scientiarum
  Hungarica}} \bibinfo{volume}{6}(\bibinfo{number}{3-4}), pp.
  \bibinfo{pages}{285--335}, \doi{10.1007/BF02024393}.

\bibitemdeclare{incollection}{Shimony84}
\bibitem{Shimony84}
\bibinfo{author}{A.~\surnamestart Shimony\surnameend} (\bibinfo{year}{1984}):
  \emph{\bibinfo{title}{Controllable and uncontrollable non-locality}}.
\newblock In \bibinfo{editor}{S.~Kamafuchi \surnamestart et. al.\surnameend},
  editor: {\sl \bibinfo{booktitle}{Foundations of Quantum Mechanics in the
  Light of New Technology}}, \bibinfo{publisher}{Physical Society of Japan},
  \bibinfo{address}{Tokyo}, pp. \bibinfo{pages}{225--230},
  \doi{10.1017/CBO9781139172196.010}.

\bibitemdeclare{article}{Spohn86}
\bibitem{Spohn86}
\bibinfo{author}{W.~\surnamestart Spohn\surnameend} (\bibinfo{year}{1986}):
  \emph{\bibinfo{title}{The representation of {P}opper measures}}.
\newblock {\sl \bibinfo{journal}{Topoi}} \bibinfo{volume}{5}, pp.
  \bibinfo{pages}{69--74}, \doi{10.1007/BF00137831}.

\bibitemdeclare{article}{Stairs83}
\bibitem{Stairs83}
\bibinfo{author}{A.~\surnamestart Stairs\surnameend} (\bibinfo{year}{1983}):
  \emph{\bibinfo{title}{Quantum logic, realism, and value definiteness}}.
\newblock {\sl \bibinfo{journal}{Philosophy of Science}} \bibinfo{volume}{50},
  pp. \bibinfo{pages}{578--602}, \doi{10.1086/289140}.

\bibitemdeclare{incollection}{Wilce12}
\bibitem{Wilce12}
\bibinfo{author}{A.~\surnamestart Wilce\surnameend} (\bibinfo{year}{2012}):
  \emph{\bibinfo{title}{Quantum logic and probability theory}}.
\newblock In \bibinfo{editor}{E.~N. \surnamestart Zalta\surnameend}, editor:
  {\sl \bibinfo{booktitle}{The Stanford Encyclopedia of Philosophy}},
  \bibinfo{edition}{fall 2012} edition.

\end{thebibliography}

\end{document}